\newcommand \yhl[1]{\ifthenelse{\boolean{HIGHLIGHTCHANGES}}{\textcolor{blue}{#1}}{#1}}
\newcommand{\slabel}{\sigma}
\newcommand{\tlabel}{\slabel_\chi}
\newcommand{\ulabel}{\slabel_u}
\newcommand{\labelSet}{\Sigma}
\newcommand{\ulabelSet}{\labelSet_u}
\newcommand{\bslabel}{\bar{\slabel}}
\newcommand{\tlabelSet}{\labelSet_\chi}
\newcommand{\btlabel}{\bar{\tlabel}}
\newcommand{\trans}[1]{\xrightarrow{#1}}
\newcommand{\out}[1]{\left\langle #1\right\rangle}
\title{\LARGE \bf
Towards composition of conformant systems
}
\author{Houssam Abbas and Georgios Fainekos
\thanks{H. Abbas is with the Department of Electrical, Computer and Energy Engineering, Arizona State University,
		Tempe, U.S.A.
        {\tt\small hyabbas@asu.edu}}%
\thanks{G. Fainekos is with the School of Informatics, Decisions and Systems Engineering, Arizona State University,
		Tempe, U.S.A.        
        {\tt\small gfaineko@asu.edu}}%
\thanks{This work was partially supported by NSF awards CNS 1350420 and CPS 1446730.}
}
\begin{document}
	
\maketitle
\thispagestyle{empty}
\pagestyle{empty}
\begin{abstract}

Motivated by the Model-Based Design process for Cyber-Physical Systems, we consider issues in conformance testing of systems.
Conformance is a quantitative notion of similarity between the output trajectories of systems, which considers both temporal and spatial aspects of the outputs.
Previous work developed algorithms for computing the conformance degree between two systems, and demonstrated how formal verification results for one system can be re-used for a system that is conformant to it.
In this paper, we study the relation between conformance and a generalized approximate simulation relation for the class of Open Metric Transition Systems (OMTS).
This allows us to prove a small-gain theorem for OMTS,  which gives sufficient conditions under which the feedback interconnection of systems respects the conformance relation, thus allowing the building of more complex systems from conformant components.
\end{abstract}

\section{Introduction}
In Model-Based Design (MBD) of systems, an executable model of the system is developed early in the design process.
This allows the verification engineers to conduct early testing~\cite{Butts_ToyotaMBD06}.
The model is then refined iteratively and more details are added, e.g., initially ignored physical phenomena, time delays, etc.
This eventually leads to the final model that gets implemented on some computational platform, for example via automatic code generation.
See Fig.~\ref{fig:MBD_v_process}.

Each of the above transformations and calibrations introduces discrepancies between the output behavior of the original system (\textbf{the nominal system}) and the output behavior of the derived system (\textbf{the derived system}).
These discrepancies are spatial (e.g., slightly different signal values in response to same stimulus, dropped samples, etc) and temporal (e.g., different timing characteristics of the outputs, out-of-order samples, delayed responses, etc) and their magnitude can vary as time progresses.

Ideally, the initial (simpler) model should be amenable to formal synthesis and verification methods (cycle 1 in Fig.~\ref{fig:MBD_v_process}) through tools like \cite{FrehseCAV11,PlatzerQ08ijcar}.
To understand how the formal verification results on the simpler nominal model can be applied to the derived more complex system, it is necessary to quantify the \emph{conformance degree} between them.
The conformance degree, introduced in \cite{AbbasHFDKU_ConformanceArxiv14,AbbasMF_MemocodeConformance14}, is a measure of both spatial and temporal differences between the output behaviors of two systems.
It relaxes traditional notions of distance, like sup norm and approximate simulation, to encompass a larger class of systems, and to allow re-ordering of output signal values.
In \cite{AbbasMF_MemocodeConformance14}, it was shown how the formal properties satisfied by the derived system can be automatically obtained from knowledge of the properties satisfied by the nominal system, and knowledge of the conformance degree between them.
\begin{figure}[t]
	\centering
	\includegraphics[width=\columnwidth]{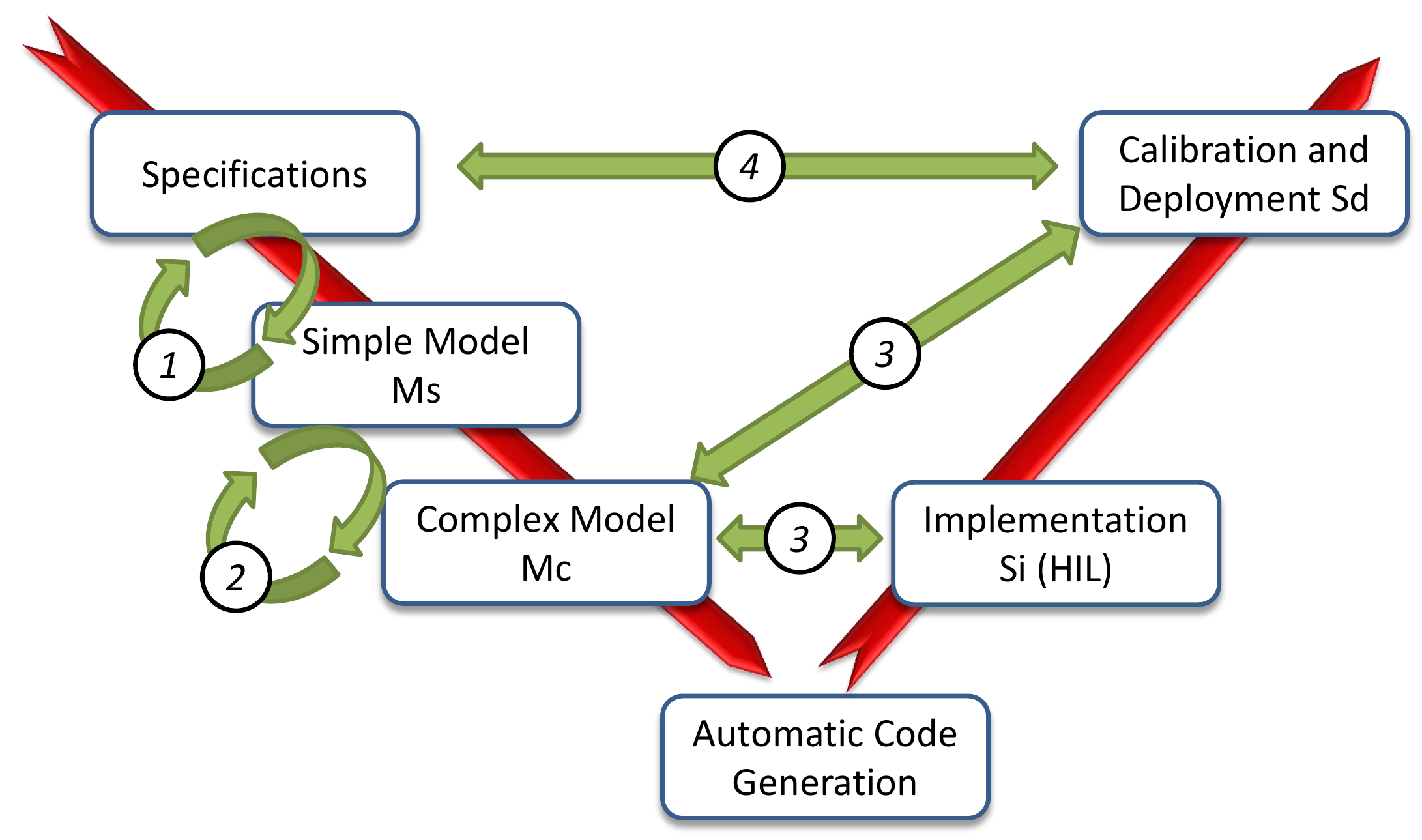}
	\caption{Model-Based Development V-process.}
	\label{fig:MBD_v_process}
\end{figure}
In this paper, we extend that work by studying feedback interconnections of systems.
Specifically, we are concerned with the following question: suppose we have a feedback interconnection of a plant and controller, and the closed-loop system has been formally verified to satisfy some properties.
If the controller (or the plant) is replaced by another controller which is conformant to it, is the new closed-loop system conformant to the original closed-loop system? 
If yes, can we estimate its conformance degree without explicitly re-computing it?
A positive answer to both questions would allow us to leverage the results in \cite{AbbasMF_MemocodeConformance14} and automatically deduce the properties satisfied by the new interconnection. 

In this paper, we give a positive answer to both questions for a general class of dynamical systems modeled as \emph{Open Metric Transition Systems} (OMTS). 
These are defined in Section \ref{sec:OMTS}.
The tool we use is a generalized notion of \emph{Space-Time Approximate Simulation} (STAS) relation, which is defined in Section \ref{STAS}.
We show in Section \ref{sim2closenessMTS} that the existence of such a relation between two OMTS implies that they are also conformant, and yields the conformance degree between them.
In Section \ref{sec:interconnectingConfSys} we provide a small-gain theorem for OMTS, which gives sufficient conditions under which feedback interconnections of OMTS respect approximate simulation, and therefore conformance.
This is done via \emph{STAS functions}, which are Lyapunov-like functions that certify the existence of a STAS relation between two systems.

\textbf{Notation.} For a positive integer $n$, $[n] = \{1,\ldots,n\}$.
Given a set $\labelSet$, $\labelSet^*$ is the set of finite strings on $\labelSet$, i.e. $\labelSet^* = \{s_0s_1\ldots s_n \sut s_i \in \labelSet, n \in \Ne\}$.
Given two sets $A,B$ and $(a,b) \in A\times B$, $\proj{(a,b)}{A}=a$.

\section{Conformance of Open Metric Transition Systems}
\label{sec:closeness2simulation}
In this section, we define a general system model, namely, Open Metric Transition Systems (OMTS).
These extend Metric Transition Systems \cite{GirardP07tac} in that they allow interconnection of systems, and will be our formalism of choice in this paper.
We then define the conformance relations for OMTS and feedback interconnections for OMTS, which allows us to speak of controlled OMTS and compositionality in Section \ref{sec:interconnectingConfSys}.
As an illustration, we show how hybrid systems can be modeled as OMTS.

\subsection{Open metric transition systems and conformance}
\label{sec:OMTS}
A Metric Transition System (MTS) serves to model, at an abstract level, a fairly large class of systems.
An MTS is a tuple $T = (Q,\tlabelSet,\rightarrow,Q^0,\Pi,\out{\;})$ 
where
$Q$ is a set known as the state space, 
$Q^0 \subset Q$ is the set of initial states, 
$\tlabelSet$ is the set of labels on which transitions take place, 
$\trans{\;} \subset Q\times \tlabelSet \times Q$ is the transition relation, 
$\Pi$ is the output set, 
and $\out{}: Q \rightarrow \Pi$ is the output map. 
We write $q \trans{\tlabel} q'$ to denote an element $(q,\tlabel,q')\in \rightarrow$.
Both $Q$ and $\Pi$ are metric spaces, that is, they are equipped with metrics $d_Q$ and $d_\Pi$. 
Moreover, for any $q \in Q$ and any label subset $S \subset \tlabelSet$, the set 
\begin{equation}
\label{eq:closedPostset}
\text{Post}(q,S) = \cup_{\tlabel \in S} \text{Post}(q,\tlabel)
\end{equation} 
is compact in the metric-induced topology.

Given a string of labels $\btlabel = {\tlabel}_{,1}{\tlabel}_{,2}\ldots{\tlabel}_{,m}$, we write ${\btlabel}_{[i]} \in \tlabelSet^*$ for the prefix string ${\tlabel}_{,1}{\tlabel}_{,2}\ldots{\tlabel}_{,i}$, $i \leq m$.
The sets $\tlabelSet$ and $\tlabelSet^*$ are equipped with pseudo-metrics\footnote{A \emph{pseudo-metric} does not separate points.} $d_{\tlabelSet}$ and $d_{\tlabelSet^*}$, respectively, and $\Pi$ is equipped with a metric $d_\Pi$.
When two MTS share the same $\tlabelSet$ ($\tlabelSet^*$, $\Pi$), they also share the same associated (pseudo-)metrics.

An \emph{Open} Metric Transition System (OMTS) is a tuple $T = (Q,\Sigma,\trans{}_T,Q^0,\Pi,\out{\;},p)$ 
where $(Q,\Sigma,\rightarrow_T,Q^0,\Pi,\out{\;})$ is an MTS as above.
The label set $\Sigma$ of an OMTS has a special structure: $\labelSet \subset \labelSet_u \times \labelSet_\chi$ for sets $\ulabelSet, \tlabelSet$.
The intuition behind this division is that $\ulabelSet$ will be used to model input signals to the system embedded as an OMTS, and $\tlabelSet$ will be used to model the domain of that input signal.
This departs from earlier approaches to embedding forced dynamical systems as MTS~\cite{Girard2005b}, because we need a way to describe interconnections of MTS, while preserving timing information in the interconnection.
A generic label $\slabel$ thus has two components: $\slabel = (\ulabel, \tlabel)$.
The string prefix $\bslabel_{[i]}$ is defined similarly to the case of MTS.
The \emph{port map} $p: (\trans{}_T) \rightarrow \labelSet \cup \{\nu\}$ associates a label to each transition in $\trans{}_T$, or a special \emph{empty label $\nu$}.
The empty label, as we will see, is used to allow a system to make empty transitions which don't change its state and don't advance time.
The output of the port map will be used to compose OMTS.
This makes them similar to hybrid I/O automata \cite{LynchSV03_HybIOAutomata} but enriched with a metric structure, and with `discrete actions' and `trajectories of input variables' lumped into one label set, which fits well our usage of hybrid time.

We now define conformance between two OMTS $T_1$ and $T_2$.
Conformance quantifies the similarity between systems, and accounts for the fact that in a typical MBD process (Fig.~\ref{fig:MBD_v_process}), the output signals of the derived model will have temporal and spatial differences with the outputs of the nominal model.
From the knowledge of the conformance degree between two systems, we can conclude what formal specifications are satisfied by one, given the specifications satisfied by the other \cite{AbbasMF_MemocodeConformance14}.
\begin{definition}[Conformance]
	\label{def:conformanceMTS}
	Let $T_1$ and $T_2$ be two OMTS with a common output space $\Pi$ and common label set $\labelSet$.
	Let $\tau, \varepsilon$ be two non-negative reals.
	Let $D \subset Q_1^0 \times Q_2^0$ be a relation defined on their initial sets. 
	We refer to $D$ as the \emph{derivation relation}.	
	We say $T_2$ \textbf{conforms to $T_1$ with precision $\mathbf{\teps}$ and derivation relation $D$}, 
	which we write $T_1 \teconforms T_2$, if for all $(q_1^0,q_2^0)\in D$, 
	and any sequence of $T_1$ transitions
	\[q_1^0 \trans{\slabel_1}_1 q_1^1 \trans{\slabel_2}_1 q_1^2 \trans{...}_1 \ldots \trans{\slabel_n}_1 q_1^m\]
	there exists a sequence of $T_2$ transitions
	\[q_2^0 \trans{\alpha_1}_2 q_2^1 \trans{\alpha_2}_2 q_2^2 \trans{...}_2 \ldots \trans{\alpha_n}_2 q_2^{m'}\]	
	 such that 
	 \begin{enumerate}[(a)]
	 	\item for all $q_1^i$, $i\in [m]$, there exists $q_2^k$ s.t. $d_\Pi(\out{q_1^i},\out{q_2^k}) \leq \varepsilon$ and $d_{\labelSet^*}(\bslabel_{[i]}, \bar{\alpha}_{[k]}) \leq \tau$ 
	 	\label{item:confa}
	 	\item for all $q_2^i$, $i \in [m']$, there exists $q_1^k$ s.t. $d_\Pi(\out{q_1^k},\out{q_2^i}) \leq \varepsilon$ and $d_{\labelSet^*}(\bslabel_{[k]}, \bar{\alpha}_{[i]}) \leq \tau$ 
	 	\label{item:confb}
	 \end{enumerate}
	 
	\end{definition}
Intuitively, the definition requires $T_2$ to be able to match any execution of $T_1$, with some allowed deviation between the states that each execution visits, and some allowed deviation between the labels on which transitions take place. 
The matching is required not only for the final reached states $q_1^m$ and $q_2^{m'}$, but for all intermediary states.
The relation $D$ is meant to capture the mapping between the initial states of one model ($T_1$) and the initial states of its implementation ($T_2$).
For example, if $T_2$ is obtained by model order reduction from $T_1$, $D$ captures the reduction mapping as applied to the initial states.
Because some of the labels in either transition sequence may be the empty label $\nu$, more than one state in one sequence may match with the same state in the other sequence.

\subsection{Feedback interconnection of OMTS}
\label{sec:feedbackOMTS}

Given two OMTS $T_1$ and $T_2$, we define their feedback interconnection as follows.
\begin{definition}[Feedback in OMTS]
\label{def:feedbackOMTS}
Let $T_i$ be an OMTS 
$(Q_i,\labelSet,\trans{\;}_i, Q_i^0, \Pi_i, \out{\;}_i, p_i)$, $i=1,2$, 
such that $\labelSet = \ulabelSet \times \tlabelSet$.
Assume that $\labelSet \supset p_2(\trans{\;}_2)$ and
$\labelSet \supset p_1(\trans{\;}_1)$.
Their \emph{feedback interconnection} is a (closed) MTS 
$(Q,{\tlabelSet}_{12},\trans{\;}, Q^0, \Pi, \out{\;})$, 
denoted $T_1 \circ T_2$, where
\begin{itemize}
	\item $Q = Q_1 \times Q_2$
	\item ${\tlabelSet}_{12} \subset \tlabelSet$
	\item $Q^0 = Q_1^0 \times Q_2^0$
	\item $\Pi = \Pi_1 \times \Pi_2$
	\item $\out{(q_1,q_2)} = (\out{q_1}_1, \out{q_2}_2)$
	\item $\trans{}$: $(q_1,q_2) \trans{\tlabel} (q_1',q_2')$ iff $\exists \slabel_1 = (\slabel_{1,u},\tlabel) \in \labelSet$
	and $\slabel_2 = (\slabel_{2,u},\tlabel) \in \labelSet$ 
	s.t. $q_1 \trans{\slabel_1}_1 q_1'$, $q_2 \trans{\slabel_2}_2 q_2'$, 
	and $\slabel_{1} = p_2(q_2 \trans{\slabel_2}_2 q_2')$, $\slabel_{2} = p_1(q_1 \trans{\slabel_1}_1 q_1')$.
\end{itemize}

The output set distance is given by 
\[d_\Pi((q_1,q_2),(q_1',q_2')) = \tilde{h}(d_{\Pi_1}(q_1,q_1'), d_{\Pi_2}(q_2,q_2'))\]
 for some positive non-decreasing function $\tilde{h}$.
\end{definition}
This is meant to model the situation when two hybrid systems are feedback interconnected, such that $T_1$'s outputs constitute the inputs to $T_2$, and vice versa.
Note that the definitions of output set, output map and associated distance function are somewhat arbitrary and ultimately depend on the application domain.

To simplify the statement of the main theorem and its proof, we introduce the following `lifting' of ${\tlabelSet}_{12}$ to $\labelSet \times \labelSet$.
The set $\labelSet_{12}$ defined below contains all label pairs  $(\slabel_1,\slabel_2) \in \labelSet_1 \times \labelSet_2$ allowed by the interconnection $T_1 \circ T_2$.
Formally:
\begin{equation}
\label{eq:lifting}
\begin{split}
\labelSet_{12} \defeq \{&(\slabel_1,\slabel_2) \in \labelSet \times \labelSet \sut \slabel_1 = p_2(q_2 \trans{\slabel_2}_2 q_2'),\\
&\slabel_2 = p_1(q_1 \trans{\slabel_1}_1 q_1'), {\tlabel}_1 = {\tlabel}_2 \in {\tlabelSet}_{12},\\
&\text{ for some transitions } q_2 \trans{\slabel_2}_2 q_2' \text{ and } q_1 \trans{\slabel_1}_1 q_1'\}
\end{split}
\end{equation}
We note two properties of $\labelSet_{12}$:
\begin{enumerate}
	\item $\labelSet_{12} \subset \labelSet \times \labelSet$
	\label{item:label1}
	\item minimizing a function over the transitions enabled by labels in ${\tlabelSet}_{12}$ 
	yields the same result as minimizing it over the transitions enabled by labels in the lifting $\labelSet_{12}$.
	\label{item:label2}
\end{enumerate}

\subsection{Problem formulation}
\label{sec:problemformulation}
The formal statement of this paper's problem follows:

Given two OMTS $T_1$ and $T_2$ connected in a feedback loop, and OMTS $T_3$ that conforms to $T_1$ with precision $\mathbf{\teps}$ and derivation relation $D$, is $T_3 \circ T_2$ conformant to the $T_1 \circ T2$?
If yes, what is the conformance degree between the two loops?

\subsection{Embedding a hybrid system as an OMTS}
\label{sec:embeddingHSasOMTS}
Hybrid systems can be represented using, or embedded as, OMTS. 
This enables us to apply the compositionality result to them.
We briefly define hybrid systems to show the embedding.
Let $C$ and $D$ be subsets of $\Re^{n+m}$, $U \subset \Re^m$ be a set of input values, 
\yhl{$F: \Re^{n+m} \rightrightarrows \Re^n$ and $G:\Re^{n+m}\rightrightarrows \Re^n$} be set-valued maps with $C \subset \dom F$ and $D \subset \dom G$.
Let $z: \Re^n \rightarrow \Re^{n_z}$ be a function.
The \emph{hybrid dynamical system} $\Sys$ with data $(C,F,D,G,z)$, internal state $\hsPt \in \Re^n$ and output $\hoPt \in \Re^{n_z}$ is governed by \cite{GoebelT06_SolnsHybInclusions}
\begin{equation}
\label{eq:HA}
\Sys \left\{ \begin{array}{lll}
\dot{\hsPt}  & \in F(\hsPt, \inpPt)  &\quad (\hsPt, \inpPt) \in C \\
\hsPt^+      &\in  G(\hsPt, \inpPt)  		&\quad (\hsPt, \inpPt) \in D   \\
\hoPt 	     &= z(\hsPt)          		
\end{array}
\right.
\end{equation}
The `jump' map $G$ models the change in system state at a mode change, or `jump', and the jump set $D$ captures the conditions causing a jump.
The `flow' map $F$ models state evolution away from jumps, while $(\hsPt,\inpPt)$ is in the flow set $C$. 
System trajectories start from a specified set of initial conditions $\hsSet_0  \yhl{\subset \proj{\overline{C}\cup D}{\Re^n}}$.
Finally, the output of the system $\hoPt$ is given as a function $z$ of its internal state,
and its input is given by $\inpPt$ which takes values in a set $\inpSet$.

Solutions $(\hstraj,\inpSig$) to \eqref{eq:HA} are given by a hybrid arc $\hstraj$ and an \emph{input arc} $\inpSig$ sharing the same hybrid time domain $\dom \hstraj = \dom \inpSig$, 
and with standard properties that can be reviewed in~\cite[Ch. 2]{GoebelST_HybridSysBook} . 
\begin{definition}[Hybrid time domains and arcs~\cite{GoebelT06_SolnsHybInclusions}]
	A subset $E \subset \Re_+\times \Ne$ is a \emph{compact hybrid time domain} if 
	\[E = \bigcup_{j=0}^{J-1}[t_j,t_{j+1}]\times \{j\}\]
	for some finite increasing sequence of times $0=t_0 \leq t_1 \leq t_2 \leq \ldots \leq t_J$.
	A \emph{hybrid arc} $\hstraj$ is a function supported over a hybrid time domain $\hstraj:E \rightarrow \Re^n$, 
	such that for every $j$, $\hstraj(\cdot,j)$ is locally absolutely continuous in $t$ over $I_j = \{t: (t,j) \in E\}$;
	we call $E$ the domain of $\hstraj$ and write it $\dom \hstraj$.
\end{definition}	
	
A hybrid system $\Sys = (C,F,D,G,z)$ can be embedded as an OMTS 
$T = (Q,\Sigma,\rightarrow,Q^0,\Pi,\out{\;},p)$ as follows:
$Q = \{x \in \Re^n \sut \exists u: (x,u) \in \overline{C}\cup D \}$,
$Q^0 \subset Q$,
$\out{\;} = z$,
and $\Pi = \Re^{n_z}$.
The label set is made of input arcs and their domains, and the empty label $\nu$:
\begin{equation}
\label{eq:SigmaEmbedding}
\Sigma = \{(\inpSig,\dom \inpSig) \sut \inpSig \text{ is an input arc}\} \cup \{\nu\}
\end{equation} 

The transition relation is defined as $q \trans{\slabel}q'$ iff either $\slabel = \nu$ is the empty label and $q=q'$,
or $\slabel = (\inpSig,\dom \inpSig)$ and there exists a solution pair $(\hstraj,\inpSig)$ s.t. 
$\hstraj(0,0)=q, \hstraj(t,j) = q'$ for some $(t,j)$ in $\dom \inpSig$.
The port map $p$ is defined as 
\begin{eqnarray*}
p(q \trans{\sigma} q') = \left\{\begin{matrix}
(z(q),(0,0)) & if \sigma = \nu\\
(z\circ \hstraj,\dom \hstraj) & \text{ otherwise}
\end{matrix}\right.
\end{eqnarray*}
 where $(\hstraj,\inpSig)$ is the solution pair of $\Sys$ corresponding to $\sigma$ as defined above in \eqref{eq:SigmaEmbedding}.

Later in the paper, we will need to impose a requirement on $d_\labelSet$, namely, equation  \eqref{eq:labelPseudo} from Section \ref{sim2closenessMTS}. 
The rest of this section shows how $d_\labelSet$ can be defined so this requirement is met.
First, given an input arc $\inpSig$ with domain $E$ and two subsets $E' \subset E$, $E'' \subset E$ such that $(0,0) \in E'\cap E''$ and $\sup_j E' = \sup_j E''$, the restrictions of $\inpSig$ to $E'$ and $E''$ respectively are said to have a common extension.
(So the restricted arcs start at $(0,0)$ and make the same number of jumps).

Let $\slabel = (\inpSig,E),\slabel' = (\inpSig',E')$ be two labels with $E = \cup_j^{J-1} I_j\times \{j\}$, $E' = \cup_j^{J'-1} I'_j\times \{j\}$ compact hybrid time domains with $J$ and $J'$ jumps, respectively. 
Define 
\begin{eqnarray*}
	d_\labelSet(\slabel,\slabel') \defeq \left\{\begin{matrix}
	\max_jd_H(I_j,I_j') && \text{$\inpSig$ and $\inpSig'$ have a}
	\\ 
	\;&&\text{common extension}
	\\
	\infty && \text{ otherwise }
	\end{matrix}\right.
\end{eqnarray*}
Here, $d_H$ is the symmetric Haussdorff distance between two sets.
A string  $s= \slabel_1 \slabel_2 \ldots \slabel_m$ is then a concatenation of the input arcs and their hybrid time domains\footnote{The concatenation of two compact hybrid time domains $E = \bigcup_{j=0}^{J_1-1}([t_j,t_{j+1}]\times j)$ and 
	$E' = \bigcup_{j=0}^{J_2-1}([t'_j,t'_{j+1}]\times j)$ is the hybrid time domain 
	$E_c = \bigcup_{j=0}^{J_1-1}([t_j,t_{j+1}]\times j) \cup \bigcup_{j=0}^{J_2-1}([t'_j+t_{J_1},t'_{j+1}+t_{J_1}]\times \{j'+N_1\})$} 
, and is itself a valid pair (input arc, hybrid time domain).
That is, in this case, $\labelSet^* \subset \labelSet$.
Therefore given two strings $s$ and $a$, we simply define $d_{\labelSet^*}(s,a) = d_{\labelSet}(s,a)$.
It can be shown that this satisfies \eqref{eq:labelPseudo}.
\section{From simulation relations to conformance relations}
\label{sec:simu2conformance}
\subsection{Space-Time Approximate Simulations}
\label{STAS}	
A \emph{Space-Time Approximate Simulation} (STAS) relation is an approximate simulation relation in the sense of \cite{JuliusP_ApxSynchronizationMTS06}.
We choose to introduce the new terminology in order to avoid potentially awkward (and possibly confusing) references to `simulation relations in the sense of [xyz]'. 
STAS were introduced in~\cite{JuliusP_ApxSynchronizationMTS06} and applied in~\cite{JuliusIBP_ApxSynchronizationMTS09} to the study of networked control systems. 

Our interest in this paper is on conformance as defined earlier, which is a notion defined on entire trajectories.
STAS relations, defined on individual states of systems, is a related notion which has the advantage of having a functional characterization, much like Lyapunov functions characterize stability.
In this section, we define STAS relations and connect them to conformance.
The functional characterization of STAS can then be used to characterize conformance.
\begin{definition}[STAS]
\label{def:simulation}
Given two OMTS $T_i = (Q_i,\labelSet, \trans{}_i,Q_i^0, \Pi, \out{}_i,p_i),i=1,2$, and positive reals $\tau, \varepsilon$,
consider a relation $R \subset Q_1 \times Q_2$, and the following three conditions: 
\begin{enumerate}
	\item $\forall (q_1,q_2) \in R$, $d_\Pi(\out{q_1},\out{q_2}) \leq \varepsilon$
	\label{item:sim1}
	\item $\forall (q_1,q_2) \in R$, $\forall q_1 \trans{\slabel_1 \in \labelSet} q_1'$, $\exists \slabel_2 \in B_\tau(\slabel_1)$ and a transition $q_2 \trans{\slabel_2} q_2'$ s.t. $(q_1',q_2') \in R$
	\label{item:sim2}
	\item $\forall q_1^0 \in Q_1^0, \exists q_2^0 \in Q_2^0$ s.t. $(q_1^0, q_2^0) \in R$
	\label{item:sim3}
\end{enumerate}
where $B_\tau(\slabel) = \{\slabel' \in \labelSet \sut d_\labelSet(\slabel,\slabel')\leq \tau\}$.
If $R$ satisfies the first 2 conditions, then it is a $\mathbf{(\tau,\varepsilon)}$\textbf{-space-time approximate simulation} (STAS) of $T_1$ by $T_2$.
If in addition it satisfies the third, then we say $T_2$ \textbf{simulates} $T_1$ \textbf{with precision} $\teps$.
\end{definition}
STAS relations describe what happens when $T_1$ `plays' label $\sigma_1$, and $T_2$ is allowed to respond by playing a label from $B_\tau(\slabel_1)$.
In particular, it says that $T_2$ can always find a label such that the distance between the reached outputs is less than $\varepsilon$.
In the rest of this paper, we will often simply speak of a simulation to mean a STAS.
%

\subsection{From simulation to conformance}
\label{sim2closenessMTS}
The connection between STAS, which is a relation between states, and conformance, which is a relation between executions, is captured in the following proposition.

\begin{prop}
	\label{prop:sim2closenessMTS}
	Given two OMTS $T_i = (Q_i,\labelSet, \trans{}_i, \Pi, \out{}_i,g_i),i=1,2$,
	let $R$ be a $\teps$-STAS relation between them,
	and let $D \subset Q_1^0 \times Q_2^0$ be a derivation relation between them.
	Assume that the label pseudo-metrics $d_\labelSet$, $d_{\labelSet^*}$ are such that for any two strings $\bar{\slabel} = \slabel_1 \ldots \slabel_i$ and $\bar{\alpha} = \alpha_1 \ldots \alpha_i$, 
	\begin{equation}
	\label{eq:labelPseudo}
		(\forall k \leq i, d_\labelSet(\slabel_k,\alpha_k) \leq \tau) \; \implies d_{\labelSet^*}(\bar{\slabel}_{[i]}, \bar{\alpha}_{[i]}) \leq \tau
	\end{equation}
	If $D \subset R$, then $T_2$ conforms to $T_1$ with precision $\teps$ and with derivation relation $D$.
	\exmend
	\end{prop}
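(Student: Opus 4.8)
The plan is to build the matching $T_2$ execution one transition at a time using condition \ref{item:sim2} of the STAS relation, keeping the invariant that the $i$-th states of the two executions stay $R$-related, and then to read the two conformance clauses directly off condition \ref{item:sim1} and the pseudo-metric hypothesis \eqref{eq:labelPseudo}.

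Concretely, I would fix a pair $(q_1^0,q_2^0)\in D$, which lies in $R$ since $D\subset R$, together with an arbitrary $T_1$ execution $q_1^0 \trans{\slabel_1}_1 q_1^1 \trans{\slabel_2}_1 \cdots \trans{\slabel_n}_1 q_1^m$, and proceed by induction on the length. The base case is the given pair $(q_1^0,q_2^0)\in R$. For the inductive step, assuming $(q_1^i,q_2^i)\in R$, I would apply \ref{item:sim2} to the transition $q_1^i \trans{\slabel_{i+1}}_1 q_1^{i+1}$ to obtain a label $\alpha_{i+1}\in B_\tau(\slabel_{i+1})$, i.e. $d_\labelSet(\slabel_{i+1},\alpha_{i+1})\leq\tau$, and a $T_2$ transition $q_2^i \trans{\alpha_{i+1}}_2 q_2^{i+1}$ with $(q_1^{i+1},q_2^{i+1})\in R$. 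Carrying this out for every step produces a $T_2$ execution $q_2^0 \trans{\alpha_1}_2 q_2^1 \trans{\alpha_2}_2 \cdots \trans{\alpha_n}_2 q_2^{m'}$ in which $q_1^i$ is paired with $q_2^i$ (the construction is one-to-one, so $m'=m$), with every state pair in $R$ and every label pair within $\tau$.

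With this execution in hand, verifying Definition \ref{def:conformanceMTS} is routine. For clause \ref{item:confa}, given $q_1^i$ with $i\in[m]$ I would take $q_2^k:=q_2^i$: condition \ref{item:sim1} applied to $(q_1^i,q_2^i)\in R$ gives $d_\Pi(\out{q_1^i},\out{q_2^i})\leq\varepsilon$, and since $d_\labelSet(\slabel_\ell,\alpha_\ell)\leq\tau$ for every $\ell\leq i$ by construction, hypothesis \eqref{eq:labelPseudo} upgrades this to $d_{\labelSet^*}(\bslabel_{[i]},\bar{\alpha}_{[i]})\leq\tau$. Clause \ref{item:confb} is the mirror image, pairing the given $q_2^i$ with $q_1^i$ and using the same two facts. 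Since $(q_1^0,q_2^0)$ was an arbitrary element of $D$, this establishes $T_1 \teconforms T_2$.

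I do not expect a genuine obstacle: the construction is a plain induction, and the only step that is not immediate is passing from stepwise label proximity to proximity of the label prefixes, which is exactly what \eqref{eq:labelPseudo} is assumed for (and why Section \ref{sec:embeddingHSasOMTS} takes pains to exhibit a $d_\labelSet$ meeting it). Two minor points to be careful about: condition \ref{item:sim3} of Definition \ref{def:simulation} is not needed here, because $D\subset R$ already furnishes the initial pairing — \ref{item:sim3} would instead be the ingredient required to conform with respect to the maximal derivation relation; and when some of the $\slabel_i$ happen to be the empty label $\nu$ the argument is unaffected, since \ref{item:sim2} is required for \emph{every} transition out of a related pair, $\nu$-transitions included, the only consequence being that consecutive states along either execution may coincide, which does not disturb clauses \ref{item:confa}--\ref{item:confb}.
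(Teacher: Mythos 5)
Your proposal is correct and follows essentially the same route as the paper's proof: an inductive construction of the matching $T_2$ execution via condition \ref{item:sim2}, with conformance clauses (\ref{item:confa})--(\ref{item:confb}) then read off from condition \ref{item:sim1} and hypothesis \eqref{eq:labelPseudo}. Your added remarks on the role of condition \ref{item:sim3} and on $\nu$-labels are sound and do not change the argument.
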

	
\begin{proof}
	Take any pair $(q_1^0,q_2^0) \in D$, and any sequence of $T_1$ transitions 
	\[q_1^0 \trans{\slabel_1}q_1^1 \trans{\slabel_2}q_1^2 \trans{\slabel_3}\ldots \trans{\slabel_n}q_1^n\]
	Because $D \subset R$, there exists a $T_2$ transition $q_2^0 \trans{\alpha_1}q_2^1$ s.t. $\alpha_1 \in B_\tau(\slabel_1)$ and $(q_1^0,q_2^0) \in R$, therefore $d_\Pi(\out{q_1^1},\out{q_2^1}) \leq \varepsilon$.
	Proceeding in this way for every $k \leq n$, we build a sequence $\mathbf{q_2}$ of $T_2$ transitions
	\[\mathbf{q_2} = q_2^0 \trans{\alpha_1}q_2^1 \trans{\alpha_2}q_2^2 \trans{...}\ldots \trans{\alpha_n}q_2^n\]
	such that $d_\labelSet(\slabel_k,\alpha_k) \leq \tau$ and $d_\Pi(\out{q_1^k},\out{q_2^k}) \leq \varepsilon$ for all $k$.
	
	Now we check condition (\ref{item:confa}) of Def.\ref{def:conformanceMTS}. 
	For any $q_1^i,i \leq n$, $d_\Pi(\out{q_1^i},\out{q_2^i}) \leq \varepsilon$ and by property \eqref{eq:labelPseudo} of the label pseudo-metric, $d_{\labelSet^*}(\slabel_{[i]},\alpha_{[i]}) \leq \tau$.
 	Thus condition (\ref{item:confa}) is satisfied. 
 	By construction of the execution $\mathbf{q_2}$ and symmetry of $d_\Pi$ and $d_{\labelSet}$, we also have condition (\ref{item:confb}).
	\end{proof}

\section{Compositionality}
\label{sec:interconnectingConfSys}
In this section, we prove a general small gain condition under which the feedback interconnection of OMTS preserves similarity relations.
By Prop. \ref{prop:sim2closenessMTS}, this implies that conformance is also preserved under these conditions.
We work in the OMTS formalism as it bypasses unnecessary technicalities and allows us to establish the result in greater generality, while maintaining continuity with the work of \cite{JuliusIBP_ApxSynchronizationMTS09}.

\subsection{Compositionality of similar metric transition systems}
\label{sec:interconnectingMTS}

\newcommand \qot {q_{12}}
\newcommand{\qtf}{q_{34}}
\newcommand{\vot}{V_{13}}
\newcommand{\vtf}{V_{24}}
\newcommand{\bVot}{\underbar{V}_{13}}
\newcommand{\bVtf}{\underbar{V}_{24}}
\newcommand{\bV}{\underbar{V}}
\newcommand{\lblSetc}[1]{{\tlabelSet}_{#1}}
\newcommand{\Bttf}{B_\tau^{34}(\slabel_{12})}

Consider OMTS $T_1,T_2,T_3,T_4$ with label sets $\labelSet_1= \labelSet_2$ and $\labelSet_3 = \labelSet_4$.
\begin{figure}
	\centering
		\includegraphics[width=6.5cm]{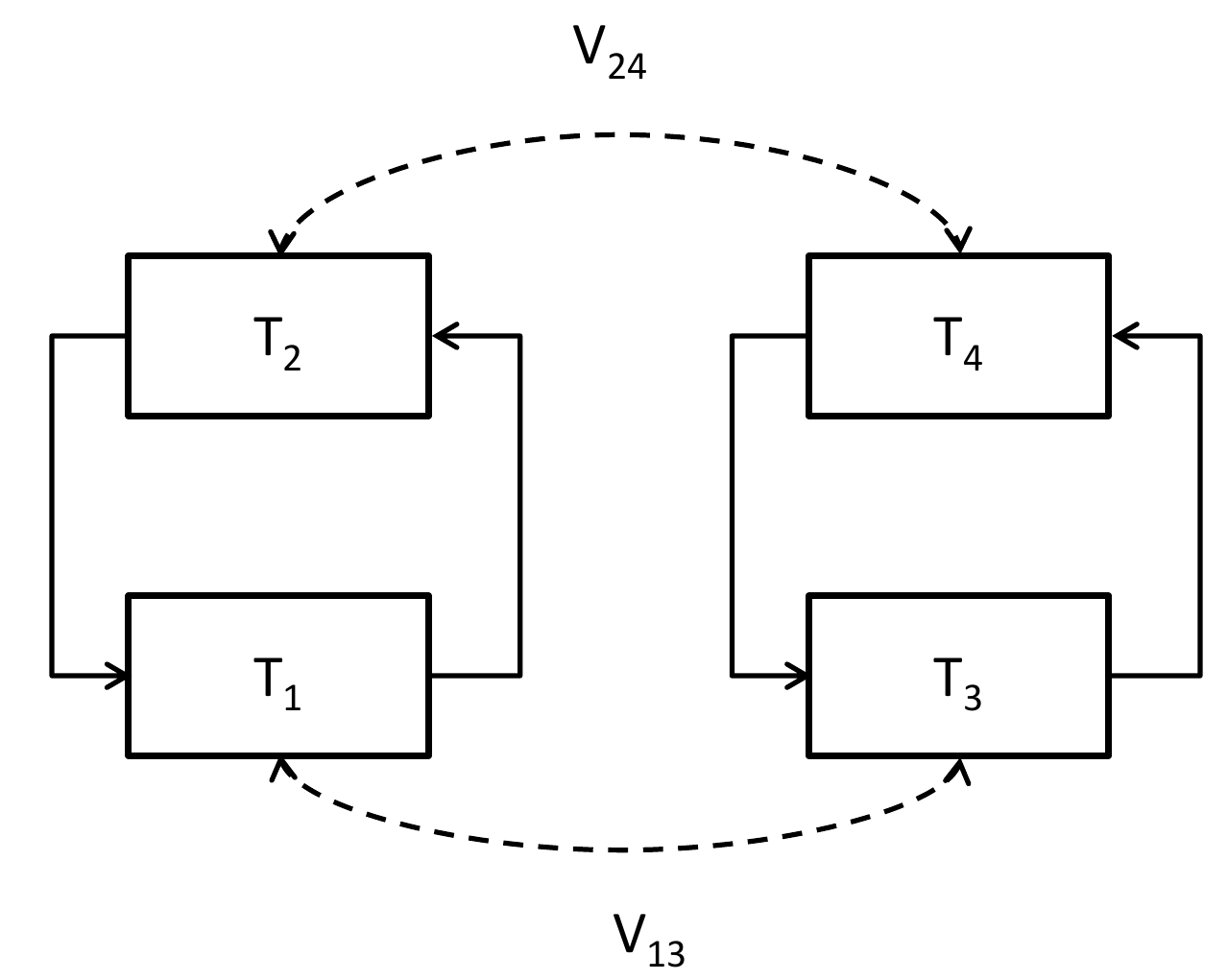}
	\caption{Interconnections of similar MTS}
	\label{fig:interconnected}
\end{figure}
Systems $T_1$ and $T_2$ are feedback interconnected to yield $T_1 \circ T_2$, 
with state space $Q_{12} = Q_1\times Q_2$, and label set $\lblSetc{12}$.
Similarly, systems $T_3$ and $T_4$ are feedback interconnected to yield $T_3 \circ T_4$,
with state space $Q_{34} = Q_3\times Q_4$, and label set $\lblSetc{34}$.
See Fig.~\ref{fig:interconnected}.
We seek conditions under which $T_3 \circ T_4$ simulates $T_1 \circ T_2$;
based on Prop.\ref{prop:sim2closenessMTS}, this would imply that under the same conditions, $T_1 \circ T_2 \teconforms T_3 \circ T_4$ for some $\teps$.
To do so, we use the functional characterization of STAS.
\begin{definition}\cite[Def. 3.2]{JuliusP_ApxSynchronizationMTS06}
\label{def:stas fnt omts}
Given two OMTS $T_1$ and $T_2$ with common output set $\Pi$ and label set $\labelSet$,
and non-negative real $\tau$,
a function $V: Q_1\times Q_2 \rightarrow \Re_+\cup \{\infty\}$ is a $\tau$-\textbf{simulation function of} $T_1$ \textbf{by} $T_2$ if 
for all $(q_1,q_2) \in Q_{1}\times Q_{2}$,
\begin{enumerate}
	\item[A0)] $V(q_1,q_2) \geq d_\Pi(\out{q_1},\out{q_2})$
	\item[A1)] $V(q_1,q_2) \geq \sup_{q_1 \trans{\slabel \in \labelSet} q_1'} \inf_{q_2 \trans{\slabel' \in B_\tau(\slabel)} q_2'} V(q_1',q_2')$	
\end{enumerate}
\end{definition}

A $\tau$-simulation function defines a $\teps$-STAS relation via its level sets. 
Namely, as shown in \cite[Thm. 3.4]{JuliusP_ApxSynchronizationMTS06}, the $\varepsilon$-sublevel set of $V$
\begin{equation}
\label{eq:levelset}
\Lc_\varepsilon^V = \{(q,q') \in Q_1 \times Q_2 \sut V(q,q') \leq \varepsilon\}
\end{equation}
is a $\teps$-STAS relation of $T_1$ by $T_2$ for all $\varepsilon \geq 0$. 

To keep the equations readable, in what follows, we define the following:
given $\slabel_{12} = (\slabel_1,\slabel_2)\in \labelSet_{12}$,
\[\Bttf \defeq \{(\slabel_3,\slabel_4) \in \labelSet_{34} \sut d_{\tlabelSet}({\tlabel}_{1},{\tlabel}_3) \leq \tau\}\]

($\labelSet_{34}$ is defined analogously to $\labelSet_{12}$ in \eqref{eq:SigmaEmbedding}).
The ball $\Bttf$ contains all labels in $\labelSet_{34}$ whose `chronological component' ${\tlabel}_{3}$ is no more than $\tau$-away from ${\tlabel}_1$.
Note that by definition for any $(\slabel_3,\slabel_4) \in \labelSet_{34}$, ${\tlabel}_3 = {\tlabel}_4$ (and analogously ${\tlabel}_1 = {\tlabel}_2$) so the above definition effectively bounds the distance between both chronological components of the label.

Consider the OMTS $T_1,T_2,T_3,T_4$, with $T_1$ in a feedback loop with $T_2$, and $T_3$ with $T_4$.
Let $\vot$ be a $\tau_{13}$-STAS function of $T_1$ by $T_3$ (Def.~\ref{def:stas fnt omts}),
and $\vtf$ be a $\tau_{24}$-STAS function of $T_2$ by $T_4$.
All systems share the same label set $\labelSet$.
We introduce the following functions to keep the equations manageable:
given $q_1' \in Q_1, q_3 \in Q_3, \slabel_i \in \labelSet$, define
\[\underbar{V}_{13}(q_1',q_3,\slabel_1) \defeq \inf_{q_3 \trans{\slabel_3 \in B_{\tau_{13}}(\slabel_1)} q_3'} \vot(q_1',q_3')\]
\[\bVtf(q_2',q_4,\slabel_2) \defeq \inf_{q_4 \trans{\slabel_4 \in B_{\tau_{24}}(\slabel_2)} q_4'} \vtf(q_2',q_4')\]

Consider $\bVot$: if we think of $T_3$ as trying to match $T_1$ transitions by minimizing $\vot$ over the label ball $B_{\tau_{13}}$, then $\bVot$ measures how well it does it.
Similarly for $\bVtf$.

Because STAS functions certify STAS relations via \eqref{eq:levelset}, the following theorem provides a way to build STAS functions for interconnections of systems, from the STAS functions of the individual connected systems.

\begin{theorem}
\label{thm:sgc}
Consider the OMTS $T_1,T_2,T_3,T_4$ with common label set $\labelSet$ interconnected as described above.
Let $\vot$ be a $\tau_{13}$-STAS function of $T_1$ by $T_3$,
and $\vtf$ be a $\tau_{24}$-STAS function of $T_2$ by $T_4$.
Set $\tau = \min(\tau_{13}, \tau_{24})$.

Define $V: Q_{12} \times Q_{34} \rightarrow \overline{\Re_+}$
to be $V((q_1,q_2),(q_3,q_4)) = h(\vot(q_1,q_3),\vtf(q_2,q_4))$ where $h$ is continuous and non-decreasing in both arguments.

Recall the definition of lifted label sets  $\labelSet_{12},\labelSet_{34}$ in \eqref{eq:lifting}.
Let $g:\Re \rightarrow \Re$ be a non-decreasing function 
s.t. $g(x) \geq x$ and 
for all $q_{12} \in Q_{12}$, $q_{34} \in Q_{34}$,
$g$ satisfies
\begin{eqnarray}
&&\sup_{\qot \trans{(\slabel_1,\slabel_2) \in \labelSet \times \labelSet} \qot'} h(\bVot(q_1',q_3,\slabel_1),\bVtf(q_2',q_4,\slabel_2)) \geq \; \nonumber \\
&& g\left(\sup_{\qot \trans{(\slabel_1,\slabel_2) \in \labelSet_{12}} \qot'} h(\bVot(q_1',q_3,\slabel_1),\bVtf(q_2',q_4,\slabel_2)) \right) 
\nonumber \\
\label{eq:g}
\end{eqnarray}

Also, let $\gamma_1,\gamma_2: \Re \rightarrow \Re_+$ be continuous non-increasing functions s.t. $\gamma_i(x) \leq x$, $i=1,2$,
 and 
for all $\slabel_{12} = (\slabel_1,\slabel_2) \in \labelSet_{12}$,
for all $(q_3,q_4) \in Q_{34}$, 
and all $(q_1',q_2') \in Q_{12}$
\begin{eqnarray}
\bVot(q_1',q_3,\slabel_1) \geq \gamma_1(\inf_{q_3 \trans{B_{\tau}^{34}(\slabel_{12})} q_3'}\vot(q_1',q_3')) 
\label{eq:gamma1}
\\
\bVtf(q_2',q_4,\slabel_2) \geq \gamma_2(\inf_{q_4 \trans{B_{\tau}^{34}(\slabel_{12})} q_4'}\vtf(q_2',q_4')) 
\label{eq:gamma}
\end{eqnarray}

If the following conditions hold:
\begin{enumerate}[(a)]
\item \label{ass:Vcontinuous}
	$V$ is continuous in the product topology of $Q_{12}\times Q_{34}$.
\item \label{ass:hhtilde}
For all $\qot \in Q_1 \times Q_2,\qtf \in Q_3 \times Q_4$,
\begin{equation}
V(\qot,\qtf) \geq d_\Pi(\out{\qot},\out{\qtf})
\label{eq:hhtilde}
\end{equation}
\item \label{ass:g dist h}
Function $g$ distributes over $h$, that is
\[g(h(x,x')) = h(g(x),g(x')) \; \forall x,x' \]
\item \label{ass:sgc} [Small Gain Condition]
For all $x \in \Re$,
\[g\circ \gamma_1(x) \geq x, \; g\circ \gamma_2(x) \geq x\]
\end{enumerate}
then 
$V$ is a $\tau$-STAS function of $T_1 \circ T_2$ by $T_3 \circ T_4$.
\exmend
\end{theorem}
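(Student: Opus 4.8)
Since $V = h(\vot,\vtf)$, the plan is to verify the two defining conditions A0) and A1) of a $\tau$-STAS function (Def.~\ref{def:stas fnt omts}) for the pair $(T_1\circ T_2,\,T_3\circ T_4)$. Condition A0) is exactly hypothesis~(\ref{ass:hhtilde}), so all the work goes into A1). Fix $\qot=(q_1,q_2)\in Q_{12}$ and $\qtf=(q_3,q_4)\in Q_{34}$. Applying condition A1) of Def.~\ref{def:stas fnt omts} to the $\tau_{13}$-STAS function $\vot$ and to the $\tau_{24}$-STAS function $\vtf$ gives $\vot(q_1,q_3)\ge\sup_{q_1\trans{\slabel_1}q_1'}\bVot(q_1',q_3,\slabel_1)$ and $\vtf(q_2,q_4)\ge\sup_{q_2\trans{\slabel_2}q_2'}\bVtf(q_2',q_4,\slabel_2)$. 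Since $h$ is non-decreasing in both arguments, $h$ evaluated at these two suprema dominates the supremum of $h$ over the product of the two transition sets, hence
\[
V(\qot,\qtf)\ge\sup_{\qot\trans{(\slabel_1,\slabel_2)\in\labelSet\times\labelSet}\qot'}h\bigl(\bVot(q_1',q_3,\slabel_1),\,\bVtf(q_2',q_4,\slabel_2)\bigr),
\]
and the right-hand side is exactly the left-hand side of~\eqref{eq:g}.

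Next I would feed this into~\eqref{eq:g} and peel off the outer $g$: being non-decreasing, $g$ passes inside the remaining supremum, and by hypothesis~(\ref{ass:g dist h}) it distributes over $h$, so
\[
V(\qot,\qtf)\ge\sup_{\qot\trans{(\slabel_1,\slabel_2)\in\labelSet_{12}}\qot'}h\bigl(g(\bVot(q_1',q_3,\slabel_1)),\,g(\bVtf(q_2',q_4,\slabel_2))\bigr).
\]
Then I would invoke the lower bounds~\eqref{eq:gamma1} and~\eqref{eq:gamma} on $\bVot$ and $\bVtf$, apply the monotone $g$ to both, and use the small-gain condition~(\ref{ass:sgc}) in the form $g\circ\gamma_i(x)\ge x$; together with monotonicity of $h$ this yields, for every $\slabel_{12}=(\slabel_1,\slabel_2)\in\labelSet_{12}$ and its target $(q_1',q_2')$,
\[
h\bigl(g(\bVot(q_1',q_3,\slabel_1)),\,g(\bVtf(q_2',q_4,\slabel_2))\bigr)\ge h\bigl(\inf\nolimits_{q_3\trans{\Bttf}q_3'}\vot(q_1',q_3'),\,\inf\nolimits_{q_4\trans{\Bttf}q_4'}\vtf(q_2',q_4')\bigr).
\]

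It remains to identify this with the right-hand side of A1). By the lifting property (item~\ref{item:label2}), the outer supremum over $\labelSet_{12}$-enabled transitions equals the supremum over ${\tlabelSet}_{12}$-enabled transitions, and the lifted image of the ball $B_\tau(\tlabel)$ — with $\tlabel$ the common chronological component of $\slabel_{12}$ — is precisely $\Bttf$; so it suffices to show that the product of the two separate infima just obtained dominates the single infimum of $V(\qot',\qtf')=h(\vot(q_1',q_3'),\vtf(q_2',q_4'))$ over the transitions of the feedback interconnection $T_3\circ T_4$ whose label lies in $\Bttf$. I expect this final recombination to be the main obstacle: it is exactly the step where the decoupled view of $T_1,T_2$ (resp.\ of $T_3,T_4$) as a free product must be reconciled with the port-map coupling inside the feedback loop. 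I would attack it using continuity of $V$ (hypothesis~(\ref{ass:Vcontinuous})) and compactness of the $\mathrm{Post}$-sets enabled by labels in $\Bttf$ to pass to genuine minimizers, and then the definition of the lifted label set to argue that such minimizers are realized jointly by a transition of $T_3\circ T_4$; the part of the gap that the loop structure cannot close on its own is exactly what hypothesis~\eqref{eq:g} absorbs on the $T_1\circ T_2$ side and what $g,\gamma_1,\gamma_2$ and the small-gain condition~(\ref{ass:sgc}) are designed to control on the $T_3\circ T_4$ side. Chaining everything gives $V(\qot,\qtf)\ge\sup_{\qot\trans{{\tlabelSet}_{12}}\qot'}\inf_{\qtf\trans{B_\tau(\tlabel)}\qtf'}V(\qot',\qtf')$, i.e.\ A1), so $V$ is a $\tau$-STAS function of $T_1\circ T_2$ by $T_3\circ T_4$. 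Finally I would check that the degenerate case of an empty label ball (some infimum equal to $+\infty$) is harmless, since the corresponding value of $\vot$ or $\vtf$, and hence of $V$, is then $+\infty$ as well.
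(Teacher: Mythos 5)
Your proposal is essentially the paper's own proof: A0 is hypothesis~(\ref{ass:hhtilde}); A1 of $\vot,\vtf$ plus monotonicity of $h$ gives the lower bound by the supremum over $\labelSet\times\labelSet$ of $h(\bVot,\bVtf)$; \eqref{eq:g} restricts the supremum to $\labelSet_{12}$ at the cost of a $g$; and \eqref{eq:gamma1}, \eqref{eq:gamma}, distributivity of $g$ over $h$, and the small-gain condition finish the chain. Two points of comparison. First, your ordering is a small but genuine simplification: by applying $g\circ\gamma_1(x)\ge x$ directly at $x=\inf_{q_3\trans{\Bttf}q_3'}\vot(q_1',q_3')$ you bypass the paper's intermediate inequality $\gamma_1(\inf_{\Bttf}\vot)\ge\inf_{\Bttf}\gamma_1(\vot)$ (its Eq.~\eqref{eq:gamma1decreasing}), which is needed there only because the paper pushes the $\gamma_i$ inside the infimum before invoking the SGC pointwise. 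Second, the step you flag as the ``main obstacle'' --- replacing $h$ of the two separate infima (over $T_3$-successors and $T_4$-successors reachable via labels in $\Bttf$) by the single infimum of $V$ over joint transitions of $T_3\circ T_4$ --- is exactly what the paper dispatches in Eq.~\eqref{eq:h achieves inf}, using precisely the continuity-of-$V$ (hypothesis~(\ref{ass:Vcontinuous})) and compactness-of-$\mathrm{Post}$-set argument you sketch. You are right that this is the delicate point: the joint reachable set of $T_3\circ T_4$ only projects into (and need not equal) the product of the two separate reachable sets, so monotonicity of $h$ alone yields $h(\inf\vot,\inf\vtf)\le\inf_{\qtf\trans{\Bttf}\qtf'}h(\vot,\vtf)$, the reverse of the inequality required; closing it requires arguing, as you indicate, that the two minimizers are realized jointly by a single transition of $T_3\circ T_4$. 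So you have identified the right tools and the correct pressure point, and what you leave unexecuted coincides with the brief justification the paper itself gives for \eqref{eq:h achieves inf}; no idea is missing relative to the paper's argument.
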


Before proving the theorem, a few words are in order about its hypotheses.
A function $g$ satisfying \eqref{eq:g} always exists: by observing that $\labelSet_{12} \subset \labelSet \times \labelSet$, we see that $g$ can be taken to be the identity. 
A non-identity function quantifies how restrictive is the interconnection $T_1 \circ T_2$.
It does so by quantifying the difference between the full label set $\labelSet \times \labelSet$ available to the individual systems operating without interconnection (on the LHS of inequality \eqref{eq:g}), and the restricted label set $\labelSet_{12}$ available to them as part of the interconnection (on the RHS).

Similarly, functions $\gamma_1,\gamma_2$ satisfying \eqref{eq:gamma} always exist: we can take $\gamma_i$ to be identically zero.
These choices, however, are unlikely to be useful: we need $\gamma_i$ to quantify how restrictive is the interconnection $T_3 \circ T_4$.
They do so by quantifying the difference between the full label ball $B_{\tau_{13}}(\slabel_1) \times B_{\tau_{24}}(\slabel_2)$ available to the individual systems operating without interconnection, and the restricted label ball $B_{\tau}^{34}(\slabel_{12}) \subset B_{\tau_{13}}(\slabel_1) \times B_{\tau_{24}}(\slabel_2)$ available to them as part of the interconnection.
See Fig.\ref{fig:labelSets} for an illustration of the label sets.
\begin{figure}
\centering
\includegraphics[scale=0.5]{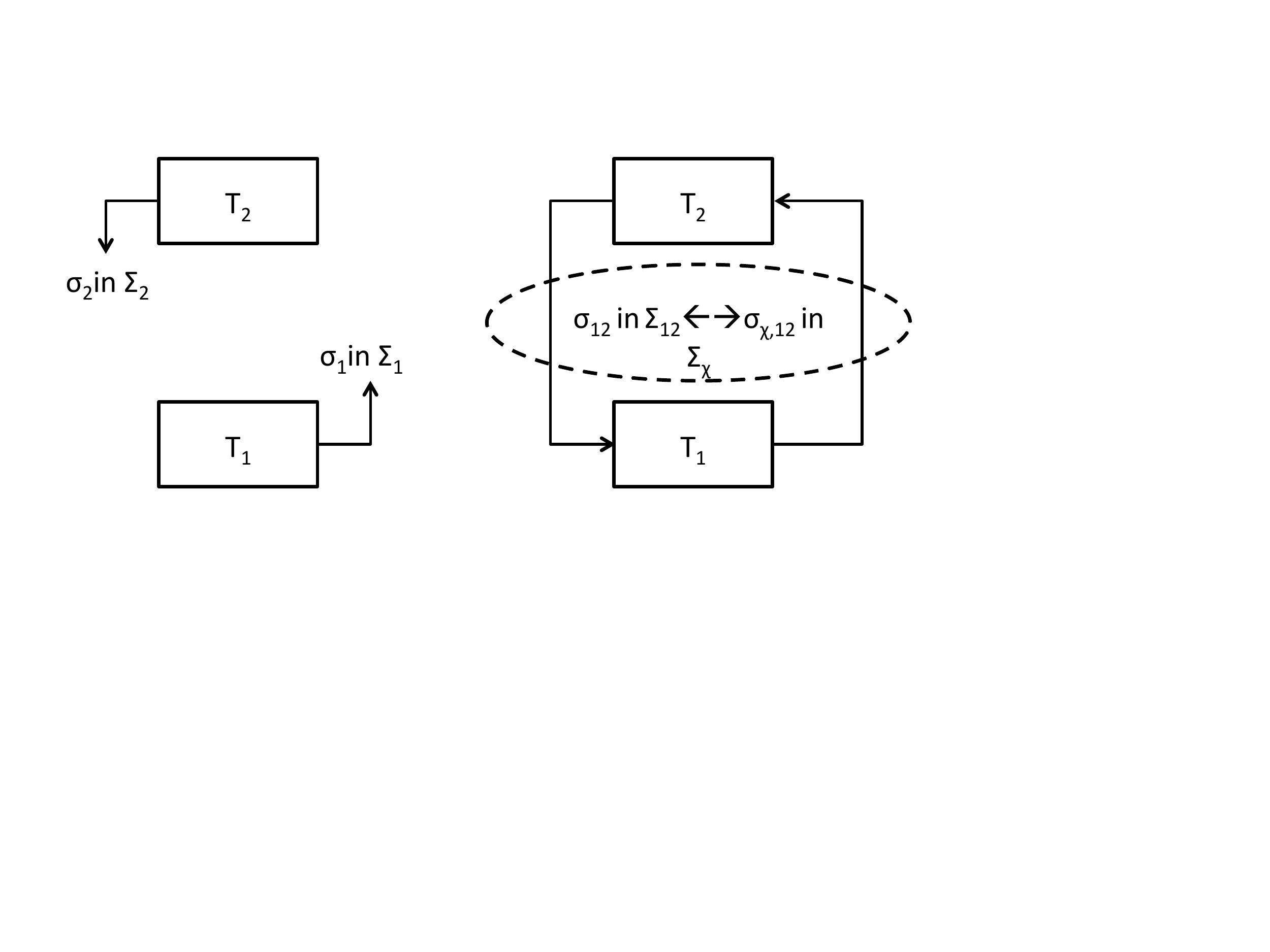}
\caption{Label sets constrained by interconnection. $\labelSet_{12}$ is the set of label pairs compatible with the interconnection as given in Def.\ref{def:feedbackOMTS}. }
\label{fig:labelSets}
\end{figure}

These two aspects are similar to the conditions, in more classical Lyapunov-based small gain theorems, placing a minimum on the rate of decrease of the Lyapunov functions of the individual systems, and that bound is related to the growth of the other system's Lyapunov function. (For example results on input-to-state stability \cite{JiangMW_LyapunovISS},\cite{SanFelice_IOSS14}, and for bisimulation functions in non-hybrid systems~\cite{Girard_CompositionBisim07}).
Now the more restrictive $T_1 \circ T_2$ is, the bigger $g$ can be.
The more restrictive $T_3 \circ T_4$ is, the smaller $\gamma_i$ need to be.
The Small Gain Condition (SGC) says that the restrictiveness of $T_1\circ T_2$ must be balanced by that of $T_3 \circ T_4$:
if $T_3 \circ T_4$ is too restrictive ($\gamma_i(x) << x$) relative to $T_1 \circ T_2$ ($g\circ \gamma_i(x) < x$), 
then $T_1 \circ T_2$ can play a label $\slabel_{12}$ that can't be matched, and thus we lose similarity of the systems.
Thus similar to the classical results (e.g.,~\cite{Girard_CompositionBisim07}), the SGC balances the gains of the feedback loops.


	\begin{proof} (Thm.~\ref{thm:sgc})
		
		We seek a STAS function $V: Q_{12} \times Q_{34} \rightarrow \overline{\Re_+}$ which would certify that $T_3 \circ T_4$ simulates $T_1 \circ T_2$, and we seek the corresponding precision $\teps$.
		
		For notational convenience, introduce
		\[\bV(\qot',\qtf,\slabel_{12}) \defeq \inf_{\qtf \trans{\slabel' \in B_{\tau}^{34}(\slabel_{12})} \qtf'} V(\qot', \qtf')\]

			By definition, $V$ must satisfy for all $(\qot,\qtf) \in Q_{12}\times Q_{34}$,
			\begin{enumerate}
				\item[A0)] $V(\qot,\qtf) \geq d_\Pi(\out{\qot},\out{\qtf})$
				\label{item:V0}
				\item[A1)] \begin{eqnarray*}
				V(\qot,\qtf) \geq \sup_{\qot \trans{\slabel \in \labelSet_{12}} \qot'} (\inf_{\qtf \trans{\slabel' \in B_\tau^{34}(\slabel)} \qtf'} V(\qot',\qtf'))
				\end{eqnarray*}	
				\label{item:V1}
			\end{enumerate}
		
		Condition A0 is the same as \eqref{eq:hhtilde}, and so is true by hypothesis.	
		Now for A1. 
		First we restate it using $\bV$:
		\begin{eqnarray*}
			V(\qot,\qtf) \geq \sup_{\qot \trans{(\slabel_1,\slabel_2) \in \labelSet_{12}} \qot'} \bV(\qot',\qtf,\slabel)
		\end{eqnarray*}
		For all $q_1,q_2,q_3,q_4$,
		\begin{eqnarray*}
			&&h(\vot(q_1,q_3), \vtf(q_2,q_4)) \\
			&\geq& h(\sup_{\labelSet_1} \bVot(q_1',q_3,\slabel_1), \sup_{\labelSet_2} \bVtf(q_2',q_4,\slabel_2)  \\
			&\geq&\sup_{\labelSet_1} \sup_{\labelSet_2} h(\bVot(q_1',q_3,\slabel_1),   \bVtf(q_2',q_4,\slabel_2))  \\
			&=&\sup_{(\slabel_1,\slabel_2) \in \labelSet_1 \times \labelSet_2} h(\bVot(q_1',q_3,\slabel_1),   \bVtf(q_2',q_4,\slabel_2))  
		\end{eqnarray*}
		where we used property A1 for $\vot$ and $\vtf$ and the fact that $h$ is non-decreasing to obtain the first inequality, and the non-decreasing nature of $h$ to obtain the second inequality.
		(The second inequality becomes equality if $\bVot$ and $\bVtf$ achieve their suprema over $\labelSet_1$ and $\labelSet_2$ respectively.)
		Using \eqref{eq:g}, it comes
		\begin{eqnarray*}
		&&h(\vot,\vtf) \geq \\
		&& g(\sup_{\qot \trans{(\slabel_1,\slabel_2) \in \labelSet_{12}} \qot'} h(\bVot(q_1',q_3,\slabel_1),   \bVtf(q_2',q_4,\slabel_2)) )\\
		\end{eqnarray*}
		
		Applying \eqref{eq:gamma1},\eqref{eq:gamma} to the RHS of this last inequality, 
		\begin{eqnarray*}
			&&h(\vot,\vtf) \geq \\
			&& g(\sup_{\qot \trans{(\slabel_1,\slabel_2) \in \labelSet_{12}} \qot'} h(\gamma_1(\inf_{\Bttf}\vot),\gamma_2(\inf_{\Bttf}\vtf) )\\
		\end{eqnarray*}
		where we are using $\inf_{\Bttf}V_{ij}$ as an abbreviation for 
		\[\inf_{q_j \trans{\Bttf}q_{j}'}V_{ij}(q_i',q_j')\]
		We now establish two inequalities.
		First, note that 
		\begin{equation}
		\label{eq:gamma1decreasing}
		\gamma_1(\inf_{\Bttf}\vot) \geq \inf_{\Bttf} \gamma_1(\vot)
		\end{equation}
		Indeed, let 
		\begin{equation*}
		\label{eq:Q3bar}
		\bar{Q}_3= \text{Post}(q_3,\Bttf)
		\end{equation*}
		be the set over which the infimization is happening.
		We have that $v_* \defeq \inf_{\Bttf} \vot(q_1',q_3')$ is finite since $V$ is lower bounded by 0. 
		Now since $v_* \leq v$ for all $v \in \vot(\bar{Q}_3)$,
		and $\gamma_1$ is non-increasing, it follows that $\gamma_1(v_*) \geq \gamma_1(v)$ for all $v \in \vot(\bar{Q}_3)$. Taking the infimum on the RHS, the inequality \eqref{eq:gamma1decreasing} follows. 
		An inequality analogous to \eqref{eq:gamma1decreasing} holds for $\gamma_2$ by a similar argument.
		
		Second, note that because $\gamma_i$ and $V$ are continuous, and $\bar{Q}_3$ is compact, then the set $\gamma_i\circ V(\bar{Q}_3)$ is compact as well. 
		Since $h$ is continuous as well, it achieves its infimum over compact sets and therefore 
		\begin{eqnarray}
		\label{eq:h achieves inf}
		&&h(\inf_{B_\tau^{34}(\slabel_{12})} \gamma_1(\vot), \inf_{B_\tau^{34}(\slabel_{12})} \gamma_2( \vtf))\nonumber  \\
		&&= \inf_{B_{\tau}^{34}(\slabel_{12})} h(\gamma_1 \circ \vot,  \gamma_2 \circ \vtf)
		\end{eqnarray}
		
		We can proceed as
		\begin{eqnarray*}
			h&&(\vot,\vtf) \\
			&\geq & g(\sup_{\qot \trans{\slabel_{12}} \qot'}h(\gamma_1(\inf_{B_\tau^{34}(\slabel_{12})} \vot), \gamma_2(\inf_{B_\tau^{34}(\slabel_{12})} \vtf))) \\
			&\geq &g(\sup_{\qot \trans{\slabel_{12}} \qot'}h(\inf_{B_\tau^{34}(\slabel_{12})} \gamma_1(\vot), \inf_{B_\tau^{34}(\slabel_{12})} \gamma_2( \vtf))) \\
			& = & g(\sup_{\qot \trans{\slabel_{12}} \qot'} \inf_{B_{\tau}^{34}(\slabel_{12})} h(\gamma_1 \circ \vot,  \gamma_2 \circ \vtf)) \\
			& = & \sup_{\qot \trans{\slabel_{12}} \qot'} \inf_{B_{\tau}^{34}(\slabel_{12})} g\circ h(\gamma_1 \circ \vot,  \gamma_2 \circ \vtf)
		\end{eqnarray*}
		To obtain the second inequality, we used \eqref{eq:gamma1decreasing} and the fact that $h$ and $g$ are non-decreasing.
		To obtain the equalities, we used \eqref{eq:h achieves inf} and the fact that $g$ is non-decreasing.
		
		By distributivity of $g$ over $h$ and the SGC
		\begin{eqnarray*}
			&&h(\vot,\vtf) \nonumber \\
			&&\geq \sup_{\qot \trans{\slabel_{12}} \qot'} \inf_{B_{\tau}^{34}(\slabel_{12})}  h(g\circ \gamma_1 \circ \vot,  g\circ \gamma_2 \circ \vtf) \nonumber \\
			&& \geq \sup_{\qot \trans{\slabel_{12}} \qot'} \inf_{\qtf\trans{B_{\tau}^{34}(\slabel_{12})}\qtf'}  h( \vot,   \vtf) \\
			\label{eq:Vissimu}
		\end{eqnarray*}
		thus concluding that $V = h(\vot,\vtf)$ satifies A1, and so is a $\tau$-STAS function.
	\end{proof}

\textbf{About the other conditions} The distributivity assumption in (\ref{ass:g dist h}) holds, for example, if $h$ is the max operator, i.e. $h(x,x')=\max(x , x')$.

Thm.~\ref{thm:sgc} assures us that feedback interconnection respects similarity relation, and therefore also respects conformance relations.

However, the conditions defining $g$ and $\gamma_i$ (equations \eqref{eq:g} and \eqref{eq:gamma},\eqref{eq:gamma1}) are technical conditions that are are hard to check. 
Turning them into a computational tool for particular classes of systems is the subject of current research.
A simpler, and more conservative, criterion is given in the following theorem:
\begin{theorem}
\label{thm:suff conf for gamma}
If 
\[k_1 \defeq \frac{\inf_{Q_1} \inf_{Q_3}\vot(q_1,q_3)}{\sup_{Q_3} \sup_{Q_1}\vot(q_1,q_3)}  < \infty\]
then $\gamma_1(v) = k_1v$ satisfies \eqref{eq:gamma1}.
Similarly, if 
\[k_2 \defeq \frac{\inf_{Q_2} \inf_{Q_4}\vtf(q_2,q_4)}{\sup_{Q_2} \sup_{Q_4}\vtf(q_2,q_4)}  < \infty\]
then $\gamma_2(v) = k_2v$ satisfies \eqref{eq:gamma}.
\exmend
\end{theorem}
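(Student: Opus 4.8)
The plan is to prove \eqref{eq:gamma1} by a crude two-sided bound on $\vot$, using nothing about the transition structure beyond the fact that $\vot$ is bounded below by $m\defeq\inf_{Q_1}\inf_{Q_3}\vot$ and above by $M\defeq\sup_{Q_3}\sup_{Q_1}\vot$. Note $k_1=m/M$, and the hypothesis $k_1<\infty$ is essentially the requirement $M>0$ (it rules out the degenerate case $\vot\equiv 0$). First I would dispose of the case $M=\infty$: then $k_1=0$, so $\gamma_1\equiv 0$ and \eqref{eq:gamma1} holds because $\bVot\geq 0$ always; so from now on assume $0<M<\infty$.

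Next, fix an arbitrary $\slabel_{12}=(\slabel_1,\slabel_2)\in\labelSet_{12}$, $(q_3,q_4)\in Q_{34}$ and $(q_1',q_2')\in Q_{12}$, and compare the two sides of \eqref{eq:gamma1}. By definition the left-hand side is $\bVot(q_1',q_3,\slabel_1)=\inf_{q_3\trans{\slabel_3\in B_{\tau_{13}}(\slabel_1)}q_3'}\vot(q_1',q_3')$, an infimum of values of $\vot$ each of which is $\geq m$; hence $\bVot(q_1',q_3,\slabel_1)\geq m$. The argument of $\gamma_1$ on the right-hand side is $b\defeq\inf_{q_3\trans{\Bttf}q_3'}\vot(q_1',q_3')$, again an infimum of values of $\vot$, each $\leq M$, so $b\leq M$. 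Since $k_1\geq 0$, multiplying the second bound through by $k_1$ gives $\gamma_1(b)=k_1 b\leq k_1 M=m\leq\bVot(q_1',q_3,\slabel_1)$, which is exactly \eqref{eq:gamma1}. The statement for $\gamma_2$ and \eqref{eq:gamma} follows verbatim with the indices $1,3$ replaced by $2,4$. I would also record that $\gamma_1(v)=k_1 v$ is an admissible gain for Thm.~\ref{thm:sgc}: it is continuous, and since $k_1=m/M\leq 1$ it satisfies $\gamma_1(v)\leq v$ for all $v\geq 0$.

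The only point that needs care --- and the closest thing here to an obstacle --- is the convention for empty infima. If no transition of $T_3$ out of $q_3$ is compatible with $\Bttf$, then $b=+\infty$ and the bound $b\leq M$ fails; but this happens only when $T_3\circ T_4$ cannot respond to $\slabel_{12}$ at all, which is outside the regime in which one is certifying a simulation. I would handle this exactly as Thm.~\ref{thm:sgc} implicitly does, reading \eqref{eq:gamma1} as quantified over those $(q_3,\slabel_{12})$ for which $\text{Post}(q_3,\Bttf)\neq\emptyset$; on such nonempty sets the bound $b\leq M$ is immediate. Beyond this bookkeeping there is no real difficulty: the content is simply that a uniformly positive, uniformly bounded simulation function $\vot$ can always be rescaled by the constant $k_1$ so as to absorb whatever restriction the interconnection imposes --- which is why this criterion, though conservative, is cheap to check.
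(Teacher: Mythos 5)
Your proof is correct and is essentially the paper's own argument: both establish \eqref{eq:gamma1} by crude global bounds on $\vot$, the paper via the chain $\bVot \geq \inf_{Q_3}\vot \geq \frac{\inf_{Q_3}\vot}{\sup_{Q_3}\vot}\,\inf\vot \geq k_1\inf\vot$ and you via the equivalent sandwich $\bVot \geq \inf_{Q_1\times Q_3}\vot = k_1\sup_{Q_1\times Q_3}\vot \geq k_1\,\inf_{q_3\trans{\Bttf}q_3'}\vot(q_1',q_3')$, with the same constant and the same conclusion. Your explicit handling of the empty-infimum and $\sup=\infty$ cases only makes precise what the paper's proof leaves implicit, so no gap.
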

\begin{proof}
	We give the proof for $k_1$, that for $k_2$ is similar.
	Define $\bar{Q}_3= \{q_3' \sut \exists q_3 \trans{B_{\tau}^{34}(\slabel_{12})} q_3'\}$ 
	and $\hat{Q}_3= \{q_3' \sut \exists q_3 \trans{B_{\tau_{13}}(\slabel_1)} q_3'\}$.
	Since $\proj{B_{\tau}^{34}(\slabel_{12})}{\labelSet} \subset B_{\tau_{13}}(\slabel_1)$, $\bar{Q}_3 \subset \hat{Q}_3 \subset Q_3$.
	Thus for any $q_1' \in Q_1$
	\begin{eqnarray*}
		\inf_{q_3' \in \hat{Q}_3}&&\vot(q_1',q_3') \geq \inf_{q_3' \in Q_3}\vot(q_1',q_3') 
		\\
		&\geq& \inf_{q_3' \in Q_3}\vot(q_1',q_3') \frac{\inf_{\bar{Q}_3}\vot(q_1',q_3')}{\sup_{Q_3}\vot(q_1',q_3')}
		\\
		&\geq& \frac{\inf_{(q_1,q_3') \in Q_1 \times Q_3}\vot(q_1,q_3')}{\sup_{(q_1,q_3') \in Q_1 \times Q_3}\vot(q_1,q_3')} \inf_{\bar{Q}_3}\vot(q_1',q_3')
		\\
		&= & k_1 \inf_{q_3' \in \bar{Q}_3}\vot(q_1',q_3')
	\end{eqnarray*}
\end{proof}
The challenge with the choice of $\gamma_1$ and $\gamma_2$ in Thm. \ref{thm:suff conf for gamma} is that $g$ is now required to always `compensate' for the worst-case behavior to satisfy the SGC. 
I.e. we need $g(x) \geq x/\max(k_1, k_2)$ for all $x$.
This may lead to a violation of \eqref{eq:g}. 

The next result follows from Thm.\ref{thm:sgc}, the fact that $h$ is increasing, and \cite[Thm. 3.6]{JuliusP_ApxSynchronizationMTS06}.
\begin{theorem}
Let 
$\varepsilon_{13} = \sup_{Q_1^0} \inf_{Q_3^0} \vot(q_1,q_3)$
and $\varepsilon_{24} = \sup_{Q_2^0} \inf_{Q_4^0} \vtf(q_2,q_4)$,
so that $T_3$ $(\tau_{13},\varepsilon_{13})$-simulates $T_1$,
and $T_4$ $(\tau_{24},\varepsilon_{24})$-simulates $T_2$.
Then $T_3\circ T_4$ $(\tau,\varepsilon)$-simulates $T_1 \circ T_2$ with 
$\tau = \min(\tau_{13}, \tau_{24}), \varepsilon = h(\varepsilon_{13},\varepsilon_{24})$.
\end{theorem}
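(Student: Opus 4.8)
The plan is to chain together three facts already in place: the compositionality of STAS \emph{functions} (Theorem~\ref{thm:sgc}), the passage from a STAS function to a STAS \emph{relation} of a prescribed precision via sublevel sets (\cite[Thm.~3.4, Thm.~3.6]{JuliusP_ApxSynchronizationMTS06}, the first of which is recalled around \eqref{eq:levelset}), and the product structure of the feedback interconnection, which lets us evaluate the resulting precision in closed form.

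First, in the setting of Theorem~\ref{thm:sgc} the function $V((q_1,q_2),(q_3,q_4)) = h(\vot(q_1,q_3),\vtf(q_2,q_4))$ is a $\tau$-STAS function of $T_1\circ T_2$ by $T_3\circ T_4$ with $\tau = \min(\tau_{13},\tau_{24})$; this already delivers the claimed value of $\tau$. Next I would invoke \cite[Thm.~3.6]{JuliusP_ApxSynchronizationMTS06}: a $\tau$-STAS function $W$ of an OMTS $S_1$ by an OMTS $S_2$ witnesses that $S_2$ $(\tau,\varepsilon^*)$-simulates $S_1$, where $\varepsilon^* = \sup_{s_1\in S_1^0}\inf_{s_2\in S_2^0} W(s_1,s_2)$ is exactly the smallest $\varepsilon$ for which the sublevel set $\Lc_\varepsilon^W$ of \eqref{eq:levelset} — a $(\tau,\varepsilon)$-STAS relation by \cite[Thm.~3.4]{JuliusP_ApxSynchronizationMTS06} — also meets condition~(\ref{item:sim3}) of Def.~\ref{def:simulation}. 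Applying this to $W = V$, $S_1 = T_1\circ T_2$, $S_2 = T_3\circ T_4$, and using $Q^0 = Q_1^0\times Q_2^0$ and $Q_3^0\times Q_4^0$ for the two interconnections, the precision is
\[\varepsilon \;=\; \sup_{(q_1,q_2)\in Q_1^0\times Q_2^0}\ \inf_{(q_3,q_4)\in Q_3^0\times Q_4^0}\ h\big(\vot(q_1,q_3),\vtf(q_2,q_4)\big).\]

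It then remains to show this equals $h(\varepsilon_{13},\varepsilon_{24})$. Fix $(q_1,q_2)$ and set $\phi(q_1) = \inf_{q_3\in Q_3^0}\vot(q_1,q_3)$, $\psi(q_2) = \inf_{q_4\in Q_4^0}\vtf(q_2,q_4)$. Since $h$ is non-decreasing in each argument, $h(\phi(q_1),\psi(q_2)) \le h(\vot(q_1,q_3),\vtf(q_2,q_4))$ for every $q_3,q_4$, so $h(\phi(q_1),\psi(q_2))$ is a lower bound for the inner double infimum; the reverse inequality follows by picking minimizing sequences $q_3^n\in Q_3^0$, $q_4^n\in Q_4^0$ and using continuity of $h$ (note the initial sets are not assumed compact, so one argues with approximating sequences rather than attained minima). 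Hence the inner infimum is $h(\phi(q_1),\psi(q_2))$, and the same monotonicity-plus-continuity argument applied to the outer supremum over the product $Q_1^0\times Q_2^0$ gives $\sup_{(q_1,q_2)} h(\phi(q_1),\psi(q_2)) = h\big(\sup_{q_1\in Q_1^0}\phi(q_1),\ \sup_{q_2\in Q_2^0}\psi(q_2)\big) = h(\varepsilon_{13},\varepsilon_{24})$, as claimed.

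The only genuinely delicate step is this interchange of $\inf$/$\sup$ with $h$ across the product decomposition of the initial-state set; it is precisely where both the monotonicity and the continuity of $h$ are used, and where one must be slightly careful because extrema need not be attained. Everything else is bookkeeping: transferring a STAS-function property to a STAS-relation precision through the already-cited results of \cite{JuliusP_ApxSynchronizationMTS06} and reading off $\tau$ from Theorem~\ref{thm:sgc}.
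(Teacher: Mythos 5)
Your proposal is correct and follows essentially the same route the paper itself indicates for this result: obtain the $\tau$-STAS function $V=h(V_{13},V_{24})$ for the interconnection from Theorem~\ref{thm:sgc}, pass to a simulation of precision $\sup\inf V$ over initial sets via \cite[Thm.~3.6]{JuliusP_ApxSynchronizationMTS06}, and use monotonicity (and continuity) of $h$ to evaluate that quantity as $h(\varepsilon_{13},\varepsilon_{24})$ over the product initial sets. The only remark worth adding is that you in fact need only the inequality $\sup\inf V \leq h(\varepsilon_{13},\varepsilon_{24})$ (a larger precision still yields a valid simulation), so the delicate interchange step can be weakened to the one-sided estimate, with continuity of $h$ needed only for the inner infimum.
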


\section{Related works}
\label{sec:relatedWork}

In this paper we understand conformance as a notion that relates \emph{systems}, as done in~\cite{TalpinGSG_conformance05}, rather than a system and its specification as done for example in~\cite{DangN09_CovGuidedTestGen}.
Most existing works on system conformance, either requires equality of outputs, or does not account for timing differences, as in~\cite{MajumdarSUY_CompositionalEC13} where an approximate method for verifying formal equivalence between a model and its auto-generated code is presented.
The approach to conformance of Hybrid Input/Output Automata in~\cite{Osch_IOCO06} and falls in the domain of nondeterministic abstractions, and a thorough comparison between this notion and ours is given in \cite{MohaqeqiMT14_ComparativeStudy}.
The works closest to ours are~\cite{JuliusP_ApxSynchronizationMTS06} and~\cite{Quesel_SimilarityGamesThesis13}.
The work \cite{JuliusP_ApxSynchronizationMTS06} defines the STAS relation we used in this paper. 
The goal in~\cite{JuliusP_ApxSynchronizationMTS06} is to define robust approximate synchronization between systems (rather than conformance testing). 
The refinement relation between systems given in~\cite{QueselFD11_CrossingBridge} allows different inputs to the two systems.
Conformance requires the same input be applied, which is a more stringent requirement.
The current theoretical framework also  allows a significantly wider class of systems than in~\cite{QueselFD11_CrossingBridge}. 
\section{Conclusions}
When a system model goes through multiple design and verification iterations, it is necessary to get a rigorous and quantitative measure of the similarities between the systems. 
Conformance testing \cite{AbbasMF_MemocodeConformance14} allows us to obtain such a measure, and to automatically transfer formal verification results from a simpler model to a more complex model of the system.
In this paper, we extended the reach of conformance testing by developing the sufficient conditions for feedback interconnections of conformant systems to be conformant. 
As pointed out earlier, these conditions apply to Open Metric Transition Systems, and while this means they are very broadly applicable, they must be specialized to specific classes of dynamical systems.
The next step is to compute STAS functions for various classes of dynamial systems, including hybrid systems. 
This is the subject of current research.
In addition, we aim to apply the compositionality theory developed here to problems in source code generation.



\bibliographystyle{abbrv}
\bibliography{CDC2015,conformance,conformanceMEMOCODE,HSCC2015_CompositionalConf,fainekos_bibrefs,SAREDnonlinear_2014}

\end{document}